\documentclass[conference,letterpaper,onecolumn]{IEEEtran}
  \usepackage[margin=1in]{geometry}
\IEEEoverridecommandlockouts

\usepackage{cite}
\usepackage{amsmath,amssymb,amsfonts,amsthm}
\usepackage{algorithmic}
\usepackage{graphicx}
\usepackage{textcomp}
\usepackage{xcolor}
\usepackage{url}

\newtheorem{example}{Example}
\def\BibTeX{{\rm B\kern-.05em{\sc i\kern-.025em b}\kern-.08em
    T\kern-.1667em\lower.7ex\hbox{E}\kern-.125emX}}

\def\E{ {\mathbb{E}}}

\def\HelstOp{\mathrm{HelstOp}}
\def\HelstM{\mathrm{HelstM}}
\def\Out{\mathrm{Out}}
\def\Packing{\mathcal{P}}
\def\HelstPacking{\mathcal{HP}}
\def\Id{\mathbb{I}}
\def\Tr{\mathrm{Tr}}
\def\CovNum{\mathcal{N}}

\def\Learner{\mathcal{L}}
\def\N{\mathbb{N}}
\def\LearningScenario{\mathcal{S}}
\def\AdmissibleDist{\mathcal{F}}
\def\Outcome{\mathfrak{O}}
\def\Inst{\mathcal{E}}

\global\long\def\11{\mathbbm{1}}

\global\long\def\+{\oplus}

\def\<{\langle}
\def\>{\rangle}

\ifdefined \var 
\renewcommand{\var}{\mathsf{var}}
\else
\newcommand{\var}{\mathsf{var}}
\fi

\ifdefined \abs \else
 \newcommand{\abs}[1]{\lvert#1\rvert}
\fi

\ifdefined \norm \else
 \newcommand{\norm}[1]{\lVert#1\rVert}
\fi

\ifdefined \set 
  \renewcommand{\set}[1]{\left\{#1\right\}}
\else
  \newcommand{\set}[1]{\left\{#1\right\}}
\fi

\DeclareMathOperator*{\argmin}{arg\,min}

\DeclareMathOperator*{\tensor}{\otimes}

\usepackage{xparse}
\usepackage{braket}
\usepackage{verbatim}

\newtheorem{definition}{Definition}

\newtheorem{lemma}{Lemma}
\newtheorem{proposition}{Proposition}

\def\Pr{\mathrm{P}}

\def\Dist{\mathcal{D}}
\def\E{\mathbb{E}}

\def\R{\mathbb{R}}
\def\union{\bigcup}

\def\union{\bigcup}
\def\intersect{\bigcap}

\def\X{\mathcal{X}}

\def\Hyp{\mathrm{Hyp}}
\def\X{\mathcal{X}}

\def\Uniform{\mathrm{Uniform}}

\def\fat{\mathrm{fat}}

\def\Out{\mathrm{Out}}

\def\Hilb{\mathcal{H}}

\def\Density{\mathrm{D}}

\def\POVMs{\mathbb{M}}

\newtheorem{theorem}{Theorem}

\begin{document}

\title{Advantage of Entangled Learning Rules in Quantum Measurement Class Learning\\
\thanks{Magner and Das are funded by NSF-ANR QISE 2545353.  Magner is also funded by NSF CAREER 2338855 and CCF 2212327.  Magner is also affiliated with the AI Plus Institute at UAlbany.}
}

\author{%
  \IEEEauthorblockN{Arka Prabha Das and Abram Magner}
  \IEEEauthorblockA{Department of Computer Science \\
                    University at Albany, State University of New York\\
                    Albany, NY, USA\\
                    Email: \{adas5, amagner\}@albany.edu}
}

\maketitle

\begin{abstract}
    Learning with data in the form of quantum states is of current interest and has led to a variety of 
    problems that boil down to interaction with the available data via quantum measurement and classical post-processing
    of observed classical outcomes.  
    In quantum measurement
    PAC learning, one is given a sequence of unknown, prepared quantum states and classical labels, along with a hypothesis class of candidate measurements.  The
    task is to select a measurement from the hypothesis class that minimizes a fixed notion of error in prediction of the classical labels via measurement of a new state by the selected hypothesis.
    In this work, we consider the advantage of interacting with the given data in the measurement learning framework using learning rules given by measurements that cannot be implemented using local operations and classical communication (LOCC), as opposed to single-copy %
    learning rules.
    We provide a construction showing that there exist learning scenarios wherein
    single-copy learning rules are asymptotically suboptimal compared to optimal ones.
    We then show that
    learning rules based on entangled measurements enjoy at most a polynomial sample complexity advantage over single-copy learning rules in the PAC learning setting (under a natural joint measurability covering assumption).  
\end{abstract}
\begin{IEEEkeywords}
quantum learning theory, entanglement, quantum hypothesis testing, sample complexity
\end{IEEEkeywords}

\section{Introduction}
\label{sec:introduction}

    Statistics and learning with data in the form of unknown quantum states is of current interest and has led to the study of a variety of 
    problems that boil down to interaction with the available data via quantum measurement and classical post-processing
    of observed classical outcomes.
    For instance, in binary hypothesis testing, the problem is to distinguish between 
    two known quantum states by interacting with a large number of copies of one of them or the other.  In quantum measurement class
    PAC learning (introduced in~\cite{heidari2021theoreticalframeworklearningquantum}  and studied subsequently in~\cite{202203AISTATS_PadMag,Magner2024Fat,heidariNewBounds}), one is given a sequence of prepared states (with unknown analytical descriptions) and observed statistically correlated classical labels, along with a hypothesis class of candidate measurements, and
    the task is to select a measurement from the hypothesis class that minimizes a fixed notion of error in prediction of the classical labels via measurement by the selected hypothesis.  This is a quantum generalization of classical hypothesis class learning motivated by, e.g., quantum sensing and quantum phase recognition.
    In both hypothesis testing and measurement class learning, one might attempt to generalize the solutions to their classical counterparts by interacting with each data sample individually via a measurement, receiving a classical outcome, and then aggregating these outcomes to produce an answer. 
    For instance, in the classical learning setup, the empirical risk minimization (ERM) rule is known to be universal, and recent work shows that a variation of this rule called denoised empirical risk minimization (\textsf{DERM}) is universal in the POVM learning framework~\cite{Magner2024Fat}.
    Denoised empirical risk minimization performs a measurement on each data sample and then computes a weighted average of functions of the outcomes.
    Such a scheme, in both cases, can be described by a single measurement that is implementable using \emph{local operations and classical communication} -- that is, each individual measurement is local in the sense that it acts on only the subsystem of its corresponding data sample, and then the individual measurement outcomes can be thought of as being sent via classical communication to an entity that aggregates them.  Such measurements are thus sometimes called \emph{LOCC measurements}.  We adopt the more common terminology of \emph{separable} or \emph{single-copy} measurements.
    However, in contrast to the classical case, in the quantum versions of the hypothesis testing and learning problems, a broader class of measurements on the tensor product of Hilbert spaces in which the data set lives is available to us. 
    In this work, we study the sample complexity advantage of interacting with the given data in these frameworks using entangled (i.e., non-separable or multi-copy) measurements.
    
    We give a construction of a class of learning scenarios wherein
    single-copy learning rules are asymptotically suboptimal compared to optimal ones.
    We then show  that
    learning rules based on entangled measurements enjoy at most a polynomial sample complexity advantage over single-copy learning rules in the PAC learning setting, 
    under a natural constraint relating to coverability of the measurement class by approximately jointly measurable subsets.

\subsection{Prior work}
\label{sec:prior-work}
Here we review relevant prior work.  Basics of quantum information theory can be found in~\cite{Nielsen_Chuang_2010,WildeBook}.
\paragraph{Work relevant to the learning framework that we study}

    The POVM PAC learning framework with which we will be primarily concerned was introduced in~\cite{heidari2021theoreticalframeworklearningquantum}.  A sample complexity upper bound for learning was introduced in~\cite{202203AISTATS_PadMag}.  A necessary and sufficient condition for learnability, along with a stronger sample complexity bound, was given in~\cite{Magner2024Fat}.  The paper \cite{heidariNewBounds} provides a sample complexity upper bound for convex measurement classes.  Computational aspects have recently been studied in, e.g.,~\cite{heidari2025hadamard}.  All of these works exclusively analyze upper bounds (except for~\cite{Magner2024Fat}), and only for specific single-copy learning rules.

\paragraph{Other quantum learning and statistical frameworks}
    Many other quantum learning frameworks that differ from measurement class learning exist and are well-studied.  These frameworks study \emph{classical} learning problems that use quantum resources (so the quantum states are analytically known) or, alternatively, learning scenarios where data are quantumly encoded, but hypotheses are classical boolean functions~\cite{bshouty1998learning,arunachalam2017optimalquantumsamplecomplexity,arunachalam2017surveyquantumlearningtheory,2020arXiv200208240A}.  In contrast, the measurement class learning problem that we continue to investigate here involves data in the form of quantum systems with an unknown analytical description and hypothesis classes in the form of quantum measurements.

    Distinct from learning, there is a class of statistical tasks called \emph{quantum tomography}.  For instance, state tomography is the following problem: given a dataset consisting of a collection of systems prepared in the same unknown state $\rho$, measure the dataset and output an estimated density matrix $\hat{\rho}$
    that is close to $\rho$ in measure (e.g., fidelity or trace distance)~\cite{SampleOptimalStateTomography,ODonnell2015EfficientQT}.
    Readers may confuse the quantum measurement learning problem with \emph{process tomography}~\cite{ProcessTomography,Kunjummen2021ShadowPT}, which has the goal of estimation of an unknown quantum channel in some metric (e.g., diamond norm).
    In contrast, the quantum measurement learning framework includes as a crucial component a hypothesis class (e.g., a class of variational quantum circuits with a given structure), from which learned measurements are drawn, and a loss function.  Both of these aspects, intuitively, build modeling assumptions into the learning problem, which are not present in tomography.  This has the advantage of reducing
    sample complexity as a function of notions of model complexity (e.g., fat shattering dimension).

\paragraph{Work that studies the advantage of entangled measurements in various settings}
    Several works study the advantage of entangled measurements for statistical tasks.  For instance, in~\cite{NEURIPS2023_acb7ce5a},
    in the context of quantum learning of boolean function classes, it is demonstrated that entangled measurements enjoy at most a polynomial sample complexity advantage over separable measurements.  Entanglement offers a strict advantage in many other scenarios, including, e.g., quantum property testing~\cite{entanglementisnecessary}
    and quantum hypothesis testing~\cite{QuantumSteinsLemma,TowardsQuantumUniversalHypothesisTesting}.
    Most relevant to us is recent work~\cite{lowe2025lowerboundslearningquantum},
    which gives lower bounds for state tomography using single-copy measurements.

\subsection{Our contributions}
\label{sec:contributions}

Our contributions are as follows:
\begin{enumerate}
    \item 
        In the POVM hypothesis class PAC learning framework, we show that there exist learning scenarios such that the sample complexity of learning via single-copy measurements is asymptotically larger (as the Hilbert space dimension tends to $\infty$) than an optimal learning rule.
    \item 
        We show that the sample complexity of learning via single-copy measurements is at most polynomially worse than that of learning via general measurements under natural conditions on the hypothesis class (e.g., that its convex hull has finitely many extreme points).
\end{enumerate}

Many proof details are given in Sections~\ref{proof:separable-learning-rules-suboptimal} and ~\ref{proof:separable-non-separable-gap-bound}, but complete proofs of auxiliary lemmas are deferred to the appendix.  

\section{Main results}
\label{sec:main-results}

\subsection{Preliminaries}
\label{sec:problem-formulation}
In this section, we cover basic definitions from quantum information theory.  We then formulate the statistical frameworks that we study, along with relevant results that we will use.

\subsubsection{Definitions and notation from quantum information theory}
\label{sec:quantum-info-defs}
Fix a Hilbert space $\Hilb$ with dimension $d$, which maybe $\infty$.
Quantum states are analytically described by \emph{density operators} $\rho$, which are positive semidefinite Hermitian operators on $\Hilb$ satisfying $\Tr(\rho) = 1$.  We denote the set of density operators on $\Hilb$ by
$\Density(\Hilb)$.  We fix an arbitrary orthonormal basis $\{ \ket{e_j}\}_{j=1}^{d}$ as the computational basis.

Quantum measurements (in the context where we do not care about post-measurement states) are described via the positive operator-valued measure (POVM) framework.  A POVM with classical outcomes taking values in a (finite or countable) set $\Omega$ is given by a collection of
positive semidefinite operators $\{\Pi_{\omega}\}_{\omega \in \Omega}$ satisfying
$\sum_{\omega \in \Omega} \Pi_{\omega} = \Id$, where $\Id$ is the identity operator on $\Hilb$.  The outcome of measuring a state $\rho$
with a POVM $\Pi$ is a random variable denoted by $\Out(\Pi, \rho)$, whose probability distribution is given by the \emph{Born rule}: for each $\omega \in \Omega$,
$ %
    \Pr[ \Out(\Pi, \rho) = \omega ]
    = \Tr(\rho \cdot \Pi_{\omega}).  
$ %
We denote the set of all POVMs by $\POVMs(\Hilb)$.  A 
\emph{quantum instrument} describes both outcome statistics and post-measurement states and is described by a collection $\Inst$ of trace-non-increasing maps $\{\Inst_\omega\}_{\omega \in \Omega}$ satisfying
$\Tr(\sum_{\omega \in \Omega} \Inst_\omega(\rho)) = \Tr(\rho)$.  For each $\omega \in \Omega$, $\Pr[\Out(\Inst, \rho) = \omega] = \Tr(\Inst_{\omega}(\rho))$, and the post-measurement state when the outcome is $\omega$ is given by $\rho_{\omega} = \frac{\Inst_{\omega}(\rho)}{\Tr(\Inst_{\omega}(\rho))}$.
Our results are concerned with the advantage of different types of quantum instruments as learning rules for measurement class learning.  
We recall the formulations of these types of measurements below.

\begin{definition}[Single-copy instrument/measurement]
    \label{def:locc-measurement}
    \label{def:single-copy-measurement}
    We say that an instrument or measurement $\Pi$ on a tensor product of Hilbert spaces $\Hilb := \Hilb_1\tensor \cdots \tensor \Hilb_m$ is a single-copy (or separable, or unentangled) measurement if it can be implemented via classical post-processing of
    the outcomes of $m$ measurements on the component Hilbert spaces 
    $\{\Hilb_j\}_{j=1}^m$.
    We call an instrument or measurement that is not single-copy \emph{entangled} or \emph{non-separable}.
\end{definition}

\paragraph{Distance measures}
For $\sigma_1, \sigma_2 \in \Density(\Hilb)$, we denote by $\|\sigma_1 - \sigma_2\|_1$ the trace distance between $\sigma_1, \sigma_2$.

For two quantum channels $M_1, M_2$, we denote by $\|M_1 - M_2\|_{\diamond}$ their
diamond distance.  We recall~\cite{WildeBook} that when $M_1$ and $M_2$ are two quantum measurements (e.g., given by POVMs), this distance simplifies to
    $ %
        \|M_1 - M_2 \|_{\diamond}
        = \max_{\kappa \in \Density(\Hilb)} d_{TV}(\Out(M_1, \kappa), \Out(M_2, \kappa)).
    $ %

\subsubsection{POVM hypothesis class PAC learning}
We next describe the POVM hypothesis class learning framework.  We fix a subset $\X$ of the space of density matrices $\Density(\Hilb)$
on a Hilbert space $\Hilb$.  We denote by $\Omega := \X \times \{0, 1\}$, where $\{0, 1\}$ is the set of \emph{class labels}.  
We also fix a \emph{hypothesis class} $\Hyp$ consisting of a set of two-outcome POVMs on $\Hilb$.  We fix a set of \emph{admissible data-generating distributions} $\AdmissibleDist$ with support on $\Omega$.
Each $\Dist \in \AdmissibleDist$ can without loss of generality be parameterized by a 
class-$0$ prior probability $p_0 \in (0, 1)$ and a
fixed 
pair of density operators $\rho$ and $\sigma$, corresponding respectively to the class labels $0$ and $1$.  In particular, a sample $(\kappa, 0)$ or $(\gamma, 1)$ from $\Dist$, where $\kappa$ and $\gamma$ are \emph{randomly chosen}, is 
indistinguishable via measurement from $(\rho, 0)$ or $(\sigma, 1)$, 
respectively.

We collect these ingredients into the following definition.
\begin{definition}[Learning scenario]
    \label{def:learning-scenario}
    A \emph{learning scenario} on $\Hilb$ is a tuple 
    $\LearningScenario = (\Hyp, \AdmissibleDist)$ where $\Hyp$ is a hypothesis class of two-outcome POVMs on $\Hilb$,  $\AdmissibleDist$ is a family of admissible distributions over
    $\X \times \{0,1\}$. Each
    $\Dist \in \AdmissibleDist$ is parameterized by a pair
    $(\rho,\sigma) \in \mathcal{D}(\mathcal{H})^2$ (corresponding to the class-$0$ and class-$1$ density operators) and a label prior
    $p_0 \in (0,1)$. 
\end{definition}

For any hypothesis class $\Hyp$, we define the loss function $\ell:\Hyp\times\Omega \to [0, 1]$ as follows:
$ %
    \ell(\Pi, (\rho, Y))
    := I[ \Out(\Pi, \rho) \neq Y ].
$ %
We note that this is a random variable.  We then define, for a hypothesis $h \in \Hyp$, its risk with respect to the data-generating distribution $\Dist \in \AdmissibleDist$:
$%
    R_{\Dist}(h)
    := \E_{Z \sim \Dist}[ \ell(h, Z) ].
$%
When $\Dist$ is clear from the context, we omit it from the notation.

We next define the PAC POVM learning game associated with a given learning scenario.  It is played between \emph{nature} and a \emph{learner} in a sequence of rounds.
\begin{center}
\fbox{
\begin{minipage}{0.92\linewidth}
\textbf{The PAC POVM Learning Game for $(\Hyp, \AdmissibleDist)$}
\begin{enumerate}%
    \item Nature fixes an unknown distribution $\Dist\in\AdmissibleDist$.
    \item 
        The learner receives $m$ i.i.d. labeled samples
        $%
        S=((\rho_j,Y_j))_{j=1}^m\sim \Dist^m.
        $ %
        Here, we call $S$ the \emph{training set} and the $Y_j$ \emph{class labels}.  We define $\rho^{(0)} := \tensor_{j=1}^m \rho_j$.
    \item 
        In round $t \in \{0, 1, ...\}$, the learner chooses a quantum instrument $\Gamma^{(t)}$
        (possibly depending on previous measurement outcomes and all class labels) to the 
        quantum state
        $\rho^{(t)}$, receiving a sample $\Outcome_t \sim \Out(\Gamma^{(t)}, \rho^{(t)})$.  Then $\rho^{(t+1)}$ is the (unobserved) post-measurement state.
    \item 
        The learner decides to stop after $T < \infty$ rounds, where $T$ depends only on all received outcomes
        $\{\Outcome_t\}_{t\geq 0}$ and class labels.  It then 
        outputs a hypothesis $h \in\Hyp$ based on this information.
\end{enumerate}
\end{minipage}
}
\end{center}

We emphasize that the output of a learner is a hypothesis that is independent of $\tensor_{j=1}^m \rho_j$, conditioned on the measurement outcomes and the class labels.

We say that the learner $\Learner$ is $(\epsilon, \delta)$-probably 
approximately correct (PAC) for the learning scenario $(\Hyp, \AdmissibleDist)$ if there exists a sample size 
$m_0(\epsilon, \delta) \in \N$ such that, for all distributions $\Dist \in \AdmissibleDist$, 
with probability $\geq 1-\delta$ over a sample dataset $S$ of size 
$m \geq m_0(\epsilon, \delta)$ from $\Dist$,
$%
    R_{\Dist}(A(S)) 
    \leq \epsilon + \inf_{h \in \Hyp} R(h).
$ %
The minimum $m_0(\epsilon, \delta)$ for which this holds is called the \emph{sample complexity} of $\Learner$ for the scenario. 
The learner is said to be PAC if it is $(\epsilon, \delta)$-PAC for all $\epsilon, \delta > 0$.

\noindent\textbf{Restrictions on measurements made by learners: }
    We say that $\Learner$ is \emph{non-adaptive} if, for all $t$, $\Gamma^{(t)}$ 
    does not depend on $\{\Outcome_{t'}\}_{t' < t}$.
    Throughout, we will implicitly only consider non-adaptive learning rules.
    We say that a learner $\Learner$ is \emph{single-copy} if each
    $\Gamma^{(t)}$ is a single-copy measurement (which implies that each $\rho^{(t)}$ is separable).  %

\subsection{Main result: Suboptimality of single-copy measurements for PAC learning}
\label{sec:pac-suboptimality}

Our first theorem shows that, for optimality, we \emph{must} sometimes consider
entangled learning rules.

\begin{theorem}[Suboptimality of single-copy learning rules]
    \label{thm:separable-learning-rules-suboptimal}
    There exist learning scenarios 
    for which every single-copy 
    learning rule is asymptotically suboptimal as a function of Hilbert space dimension, for small enough $\epsilon$: every single-copy learning rule has 
    sample complexity $\Omega(d^5)$,
    while an optimal (entangled) learning rule has sample complexity 
    $O(d^4)$, for all $\epsilon < c/d$ for a sufficiently small constant $c > 0$ and $\delta = 1-\Omega(1)$ as $d\to\infty$.
\end{theorem}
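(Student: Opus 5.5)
We reduce learning to a state-discrimination task in which the single-copy and entangled sample complexities are already separated, and build the learning scenario so that the reduction holds in both directions. A natural source is the single-copy tomography lower bounds of~\cite{lowe2025lowerboundslearningquantum} and the matching entangled upper bounds~\cite{SampleOptimalStateTomography,ODonnell2015EfficientQT}. Those give $\Theta(d^2/\epsilon^2)$ with entangled measurements versus $\Theta(d^3/\epsilon^2)$ with single-copy measurements for trace-distance tomography of full-rank (or Haar-like) states. Substituting $\epsilon \asymp 1/d$ yields exactly $O(d^4)$ versus $\Omega(d^5)$, which explains the regime $\epsilon < c/d$ in the statement.

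\textbf{Construction.} Take $\AdmissibleDist$ to be all distributions with $p_0 = 1/2$, $\sigma$ fixed and known (say the maximally mixed state on an orthogonal block, or simply $\sigma = \Id/d$), and $\rho$ ranging over a packing $\{\rho_v\}$ of states that are pairwise $\Theta(\epsilon)$-far in trace distance. We pick the packing used in the single-copy tomography lower bound: perturbations $\rho_v = (\Id + \epsilon' U_v)/d$ with $\epsilon' \asymp \epsilon$. Take $\Hyp$ to be the Helstrom measurements $\{\HelstM(\rho_v,\sigma)\}_v$ together with enough of their neighbours. The key property we need is a risk identity: $R_{\Dist_v}(\HelstM_u) - R_{\Dist_v}(\HelstM_v) \gtrsim \|\rho_u - \rho_v\|_1$, up to constants. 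This should follow from the Helstrom theorem, since $\HelstM_v$ attains risk $\tfrac12 - \tfrac14\|\rho_v-\sigma\|_1$, and by choosing the $U_v$ so that the positive parts of $\rho_v - \sigma$ differ across $v$. With this identity, any $(\epsilon,\delta)$-PAC learner identifies $v$, or a $u$ close to $v$, with probability $1-\delta$. By Fano's inequality over the packing, with $\delta = 1-\Omega(1)$ still allowed, this yields a single-copy lower bound of $\Omega(d^3/\epsilon^2) = \Omega(d^5)$. Here we invoke the argument of~\cite{lowe2025lowerboundslearningquantum} directly, after observing that the label-$1$ samples carry no information about $v$, so they can be simulated and discarded. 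That observation costs only a factor of $2$ in $m$.

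\textbf{Upper bound.} An entangled learner keeps the $\approx m/2$ label-$0$ samples and runs sample-optimal entangled tomography~\cite{SampleOptimalStateTomography} to obtain $\hat\rho$ with $\|\hat\rho - \rho_v\|_1 \le \epsilon$ using $O(d^2/\epsilon^2) = O(d^4)$ copies. It then outputs $\HelstM(\hat\rho,\sigma)$, or the nearest hypothesis in $\Hyp$. The risk is a $1$-Lipschitz function of the state in trace norm, so the excess risk is $O(\epsilon)$, and rescaling constants gives the claim.

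\textbf{Main obstacle.} The hardest step is the risk identity. We must show that choosing a wrong hypothesis costs $\Omega(\epsilon)$ in risk for a packing that is simultaneously hard for single-copy tomography. Helstrom risk differences can be second order in the perturbation, and $|\Tr((\rho_v - \sigma)(P_u - P_v))|$ may be much smaller than $\|\rho_u-\rho_v\|_1$. To avoid this, I would first try taking $\sigma$ far from every $\rho_v$, for instance supported on an orthogonal subspace. Then the Helstrom projector is essentially fixed, and I would instead let $\Hyp$ consist of measurements $\{P_u, \Id - P_u\}$ whose risk under $\rho_v$ depends linearly on $\Tr(\rho_v P_u)$. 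This makes the learning task a linear test, so the excess risk is first order in $\|\rho_u-\rho_v\|_1$. The remaining work is to check that the single-copy lower bound technique still applies, and that the class can be made rich enough, with $\Hyp$ approximating all projectors, for this linear test to be equivalent to trace-distance tomography.
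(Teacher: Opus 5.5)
Your reduction and parameter bookkeeping are sound, and they follow the paper's architecture with the labels swapped. One class is maximally mixed, and the other ranges over the Haar packing $\rho_v=\frac1d\bigl(\Id+\epsilon'(2P_v-\Id)\bigr)$, where $P_v$ is a rotated rank-$d/2$ projector. The uninformative label is simulated, Fano is combined with the single-copy bound of~\cite{lowe2025lowerboundslearningquantum}, and tomography gives the upper bound. The scaling $\epsilon'\asymp\epsilon\asymp 1/d$ is the right one.

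The gap is the decoding step, which you flag yourself and leave open. ``Should follow from the Helstrom theorem'' is not an argument: Helstrom gives optimality of $\HelstM_v$, not a quantitative loss for its competitors. Moreover, your identity only compares packing Helstrom measurements with each other. With ``neighbours'' in $\Hyp$, the learner may output some $h$ outside the packing. You then need a lower bound on excess risk valid for every $h\in\Hyp$, together with a rule mapping $h$ back to $v$.

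The orthogonal-$\sigma$ fallback escapes nothing. If $\Hyp$ contains the support projector of the $\rho$-block, the problem is trivial and needs zero samples. If you restrict to rank-$d/2$ projectors $P_u$ on that block, the excess risk of $P_u$ under $v$ is again exactly $\frac{\epsilon'}{2d}\|P_u-P_v\|_F^2$. That is the same quadratic quantity as in your main construction.

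The missing idea is that this quadratic behaviour is harmless at the packing scale. Take any two-outcome POVM $h=(M_0,\Id-M_0)$, with outcome $0$ predicting the $\rho_v$-label. Then $R_v(h)-R_v(\HelstM_v)=\frac{\epsilon'}{2d}\bigl(\Tr P_v-2\Tr(P_vM_0)+\Tr M_0\bigr)\ge\frac{\epsilon'}{2d}\|M_0-P_v\|_F^2$, using $M_0^2\le M_0$. This is the Frobenius step inside Lemma~\ref{lemma:diamond-distance-risk-bound}.

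The trace-distance packing property you already invoke gives Frobenius separation for free: $c_1\epsilon'\le\|\rho_u-\rho_v\|_1=\frac{2\epsilon'}{d}\|P_u-P_v\|_1\le\frac{2\epsilon'}{\sqrt d}\|P_u-P_v\|_F$. Hence excess risk at most $\epsilon<c_1^2\epsilon'/32$ forces $\|M_0-P_v\|_F<c_1\sqrt d/4$, which is less than half the separation $c_1\sqrt d/2$. Nearest-neighbour decoding in Frobenius norm therefore returns $v$. This works with $\Hyp$ equal to all two-outcome POVMs, with no bookkeeping about neighbours. Taking $M_0=P_u$ also recovers your identity: the excess risk is at least $c_1^2\epsilon'/8\ge\frac{c_1^2}{16}\|\rho_u-\rho_v\|_1$.

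The paper instead passes to operator/diamond norm (Lemmas~\ref{lemma:diamond-distance-risk-bound} and~\ref{lemma:state-packing-diamond-distance}). That step loses a factor $d$, so it only decodes for $\epsilon\lesssim\gamma/d$. Combined with the $\Omega(d^3/\gamma^2)$ packing bound, this does not by itself give $\Omega(d^5)$ at $\epsilon\asymp 1/d$. The Frobenius version is what justifies your choice $\epsilon'\asymp\epsilon$.

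Finally, make the $O(d^4)$ upper bound uniform over all $\epsilon<c/d$ by tomographing only to accuracy a small constant times $\epsilon'$, not to accuracy $\epsilon$. Then round to the nearest packing state and output its Helstrom measurement, which has zero excess risk.
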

\emph{Remark: } The range of $\epsilon$ is a consequence of our Lemma~\ref{lemma:diamond-distance-risk-bound}, relating the excess risk of a hypothesis to its diamond distance to the optimal one.

\subsection{Main result: Upper bound on the gap between single-copy and entangled rules}
\label{sec:separable-non-separable-gap-bound}

Our next result gives a bound on the suboptimality gap for single-copy learning rules.
It says that, so long as the hypothesis class can
be partitioned into a small number of \emph{approximately jointly measurable} subsets,
the sample complexity of single-copy learning only suffers a polynomial disadvantage compared to
an optimal learning rule.  We state the result in terms of the \emph{joint measurability covering number} $\CovNum_{JM}(\Hyp, \epsilon)$ defined in~\cite{Magner2024Fat}, Definition~11.

\begin{definition}[Joint measurability covering number~\cite{Magner2024Fat}]
    \label{def:jm-covering-number}
    For $\epsilon>0$, we say that a subset $H \subseteq \Hyp$
    is $\epsilon$-approximately jointly measurable if there exists
    a single ``root'' measurement $\Gamma$ such that every measurement $\Pi \in \Hyp_i$  can be approximated to diamond distance $\epsilon$ by classical post-processing of the outcome of $\Gamma_i$.
    Then the joint measurability covering number
    $N_{\mathrm{JM}}(\Hyp,\epsilon)$ is the smallest integer $N$ such
    that $\Hyp$ can be covered by subsets
    $ %
        \Hyp\subseteq \union_{i=1}^N \Hyp_i
    $ %
    where each $\Hyp_i$ is $\epsilon$-approximately jointly measurable.
\end{definition}

\begin{example}[Convex class with finitely many extreme points]
    \label{ex:convex}
    Consider a finite-dimensional Hilbert space, and let $\mathrm{Hyp}$ be a convex set of two-outcome POVMs with at most
    $n_e$ extreme points. 
    Then $N_{\mathrm{JM}}(\mathrm{Hyp},\epsilon)
    \leq \lceil \frac{C}{\epsilon} \rceil^{n_e}$. For $n_e = O(1)$, this is
    polynomial in $1/\epsilon$.

    The claimed upper bound on the joint measurability covering number in this setting follows by the known upper bound by the
    diamond distance covering number (\cite{Magner2024Fat}, Lemma~12), 
    then applying known metric entropy bounds for polytopes.
\end{example}

\begin{theorem}[Upper bound on the gap between single-copy and entangled learning rules]
    \label{thm:separable-non-separable-gap-bound}
    There exists a single-copy learning rule $\Learner$ such that,  for any measurement hypothesis class $\Hyp$, if $\Hyp$ is $(\epsilon, \delta)$-PAC learnable
    with $m_0(\epsilon, \delta)$ samples, then $\Learner$ is $(\epsilon, \delta)$-PAC for
    $\Hyp$ with sample complexity at most polynomial in $m_0(\epsilon, \delta), \epsilon, \delta,$ and
    $\CovNum_{JM}(\Hyp, \epsilon)$.
\end{theorem}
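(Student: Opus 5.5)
Fix an optimal cover $\Hyp \subseteq \union_{i=1}^N \Hyp_i$ with $N = \CovNum_{JM}(\Hyp,\epsilon)$. For each $i$, let $\Gamma_i$ be the root measurement of $\Hyp_i$. For each $\Pi \in \Hyp_i$, let $f_\Pi$ be a classical post-processing such that $f_\Pi\circ\Gamma_i$ is within diamond distance $\epsilon$ of $\Pi$.

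The single-copy rule $\Learner$ is a two-stage "denoised ERM over a finite menu" procedure. It splits the $m$ samples into $N$ blocks of size $k$. On block $i$ it measures every sample individually with $\Gamma_i$ and records the classical outcomes together with the labels. For every $\Pi\in\Hyp_i$ it then forms the empirical risk of $f_\Pi\circ\Gamma_i$ from these outcomes. Because the outcomes are classical, this evaluates all hypotheses in $\Hyp_i$ simultaneously on the same data, with no disturbance issue. Finally, $\Learner$ outputs the $\Pi$ minimizing this estimated risk over all $i$.

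\textbf{Step 1 (approximation).} Since $\ell$ is an indicator of the outcome disagreeing with $Y$, the risk is $1$-Lipschitz in diamond distance. Hence $|R(\Pi)-R(f_\Pi\circ\Gamma_i)|\le\epsilon$, and it suffices to learn the classical class $\mathcal{G}_i=\{f_\Pi\circ\Gamma_i\}$ over the classical distribution of $(\Out(\Gamma_i,\rho),Y)$.

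\textbf{Step 2 (uniform convergence on each block).} We need a uniform deviation bound for $\mathcal{G}_i$ whose size is controlled polynomially by $m_0(\epsilon,\delta)$. This is the main obstacle, because the complexity of $\mathcal{G}_i$ (e.g.\ its VC or fat-shattering dimension) is not obviously bounded by the entangled sample complexity. I would first invoke the characterization in~\cite{Magner2024Fat}: PAC learnability of $\Hyp$ with $m_0$ samples, for arbitrary (including entangled) learners, implies a bound on the fat-shattering dimension of $\Hyp$ at scale $\Theta(\epsilon)$. That paper's lower bound argument should show that learning requires $m_0\gtrsim \fat_{c\epsilon}(\Hyp)/\mathrm{poly}(\epsilon)$ even for entangled rules, since it proceeds via packing/Helstrom-type arguments that apply to arbitrary measurements. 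Since the functions in $\mathcal{G}_i$ are $\epsilon$-close to members of $\Hyp$, the fat-shattering dimension of $\mathcal{G}_i$ at scale $2c\epsilon$ is at most that of $\Hyp$ at scale $c\epsilon$. The standard fat-shattering uniform convergence bound then gives
\[
k=\mathrm{poly}\!\left(m_0(\epsilon,\delta),\tfrac1\epsilon,\log\tfrac{N}{\delta}\right)
\]
samples per block, enough for $\epsilon$-accurate risk estimates of all of $\mathcal{G}_i$ with probability $1-\delta/N$. If that route fails, the fallback is to use the universality of \textsf{DERM} from~\cite{Magner2024Fat} directly on each $\Hyp_i$; its sample complexity is already polynomial in the fat-shattering dimension, so only the fat-shattering lower bound in terms of $m_0$ is still needed.

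\textbf{Step 3 (combine).} A union bound over the $N$ blocks shows that, with probability at least $1-\delta$, all estimates are within $\epsilon$ of the true risks. Standard ERM reasoning then yields excess risk $O(\epsilon)$; rescaling $\epsilon$ by a constant gives the stated guarantee. The total sample size is $m=Nk$, which is polynomial in $N=\CovNum_{JM}(\Hyp,\epsilon)$, in $m_0$, in $1/\epsilon$, and in $\log(1/\delta)$.
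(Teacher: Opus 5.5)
Your fallback is essentially the paper's proof. The paper bounds $\fat_{c\epsilon}(\Hyp)$ by $\mathrm{poly}(m_0(\epsilon,\delta),1/\epsilon,\log(1/\delta))$ and substitutes this into the \textsf{DERM} bound $\mathrm{poly}(\fat_{c\epsilon}(\Hyp),\CovNum_{JM}(\Hyp,\epsilon),1/\epsilon,\log(1/\delta))$ from eq.~(9) of \cite{Magner2024Fat}.

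However, you correctly identify the fat-shattering bound against \emph{entangled} learners as the crux, and then leave it as a conjecture about another paper's argument (``should show \ldots\ via packing/Helstrom-type arguments''). The missing idea is a short classical simulation. Since $Y\in\{0,1\}$, $\Pr[\Out(\Pi,\rho)\neq Y]=|\Tr(\rho\Pi_1)-Y|$, so the risk of $\Pi$ is the absolute-loss risk of the real-valued function $\rho\mapsto\Tr(\rho\Pi_1)$. A classical learner that sees explicit descriptions of $\rho_1,\dots,\rho_m$ and the labels can compute the Born-rule statistics of any instrument on $\tensor_j\rho_j$, entangled or not. It can therefore simulate any quantum learner and output the same hypothesis. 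Hence $m_0(\epsilon,\delta)$ upper-bounds the classical sample complexity of learning the induced real-valued class. The classical lower bound \cite[Thm.~19.5]{anthony1999neural} then gives the needed bound on $\fat_{c\epsilon}(\Hyp)$. This reduction, not a packing or Helstrom argument, is what makes the bound hold for every learning rule.

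Your primary route fails at Step 2. The fat-shattering bound you transfer to $\mathcal{G}_i$ concerns functions of the states $\rho_j$, which the learner never sees. Your empirical risks are functions of the observed outcomes $\omega_j$ of $\Gamma_i$. Their uniform convergence is governed by how rich the post-processings $\{f_\Pi\}$ are on the outcome space of $\Gamma_i$, and the state-level fat-shattering dimension does not control that.

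Here is a concrete failure. Let $\Gamma_i$ have $M\gg k^2 d$ rank-one effects, each of operator norm $d/M$ (e.g., from a tight frame), and let $\Hyp=\Hyp_i$ consist of all deterministic post-processings of $\Gamma_i$, so $\CovNum_{JM}(\Hyp,\epsilon)=1$.
\begin{itemize}
\item The state-level class is linear in $\rho$, so its fat-shattering dimension is at most $d^2$.
\item With high probability the $k$ observed outcomes are distinct and carry total probability at most $kd/M$.
\item So every labeling of them is fit with zero empirical risk, while the true risk is set by the unconstrained values of $f$ on unobserved outcomes.
\end{itemize}
Thus ``standard fat-shattering uniform convergence'' does not give your per-block bound for an arbitrary optimal cover and its root. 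The paper does not attempt this step; it uses the \textsf{DERM} guarantee of \cite{Magner2024Fat} as a black box, which is your fallback.
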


\section{Conclusions and discussion}
\label{sec:conclusions}

In this work, we studied the advantage of measurements that exploit non-locality in the quantum statistical framework of POVM class PAC learning.  This question is motivated, on one hand, by the desire for POVM learning rules with optimal sample complexity, and on the other by the cost of implementation of entangled learning rules.  A similar question has been asked (and answered) in a variety of quantum statistical contexts, including quantum property testing and quantum hypothesis testing.

We made progress in the study of POVM learning by exhibiting learning scenarios and accuracy parameter ranges where single-copy learning rules are strictly suboptimal.  We additionally showed that single-copy learning rules suffer at most a polynomial sample complexity detriment over entangled learning rules in general, under a natural assumption entailing coverability of the POVM class with polynomially many almost jointly measurable subsets (as a function of the learning accuracy parameter $\epsilon$).  This assumption is satisfied, for example, whenever the hypothesis class has a convex hull having finitely many extreme points.  

It would be of interest to exhibit practically useful hypothesis classes for which an entangled learning rule has a strict advantage.

\section{Proof of Theorem~\ref{thm:separable-learning-rules-suboptimal} (sketch)}
\label{proof:separable-learning-rules-suboptimal}

We will restrict our attention to a fixed Hilbert space $\Hilb$ with finite dimension $d$ and the hypothesis class $\Hyp$ consisting of all $2$-outcome POVMs.  We also set the class-$0$ density operator to be $\rho := \frac{1}{d}\Id$.  The collection of class-$1$ density operators that we consider will be $\sigma \in \Packing(\gamma)$
defined in Definition~\ref{def:packing-definition} below.  With these choices, we know from quantum state discrimination~\cite{Nielsen_Chuang_2010} that the minimum-risk hypothesis is the H\"elstrom measurement corresponding to $(\rho, \sigma)$ (see Definition~\ref{def:helstrom-measurement} below), which we denote by $\HelstM(\rho,\sigma)$.

We remark that, despite borrowing tools from a hypothesis testing 
problem, the learning scenario that we construct is distinct from 
testing: the goal of learning in this scenario remains to produce a 
hypothesis that performs close to optimally when given a single new 
sample state (without its classical label).  We are not given 
\emph{both} states $\rho$ and $\sigma$ in this setting.  In 
contrast, in the testing problem, one knows the analytical 
descriptions of both $\rho$ and $\sigma$ and must accurately decide 
whether a given register is prepared in state $\rho$ or $\sigma$.

Our proof consists of a sample complexity upper bound
using an entangled learning rule (Proposition~\ref{prop:sample-complexity-upper-bound}), followed by a sample complexity \emph{lower} bound (Proposition~\ref{prop:single-copy-lower-bound}) for \emph{every} single-copy learning rule for a specific learning scenario.  The upper bound is asymptotically smaller than the lower bound as $d\to\infty$, which is the crux of the proof.

\subsection{Preliminaries}

We start by recalling the H\"elstrom measurement from quantum state discrimination~\cite{Nielsen_Chuang_2010}.

\begin{definition}[H\"elstrom measurement of a pair of density operators]
    \label{def:helstrom-measurement}
    Let $\kappa_1, \kappa_2 \in \Density(\Hilb)$.
    We define the H\"elstrom operator by $\HelstOp(\kappa_1, \kappa_2) := \kappa_1 - \kappa_2$.
    Supposing that $\HelstOp(\kappa_1, \kappa_2)$ diagonalizes as 
    $%
        \HelstOp(\kappa_1, \kappa_2) = \sum_{j=1}^d \lambda_j \ket{v_j}\bra{v_j}.
    $%
    We define orthogonal projectors onto the positive and negative eigenspaces
    of $\HelstOp(\kappa_1, \kappa_2)$ as follows:
    $%
        P_{+}(\kappa_1, \kappa_2) := \sum_{j ~:~ \lambda_j > 0} \ket{v_j}\bra{v_j}$ and %
        $P_-(\kappa_1, \kappa_2) := \sum_{j ~:~ \lambda_j < 0} \ket{v_j}\bra{v_j}.
    $ %
    Finally, we define the H\"elstrom measurement to be the following POVM:
    $\HelstM(\kappa_1, \kappa_2) := ( P_-(\kappa_1, \kappa_2), P_+(\kappa_1, \kappa_2))$.
\end{definition}

For our learning sample complexity lower bound, we will recall a packing of states
used in sample complexity lower bound arguments for various quantum statistical tasks,
such as state tomography and quantum property testing~\cite{entanglementisnecessary,lowe2025lowerboundslearningquantum}.
\begin{definition}[Random state packing]
    \label{def:packing-definition}
    Fix $\gamma > 0$.
    We define the set $\Packing(\gamma)$ of states as follows.
    We define
    \begin{align}
        X := \sum_{j=1}^{d/2} \gamma \ket{e_j}\bra{e_j} + \sum_{j=d/2+1}^d (-\gamma)\ket{e_j}\bra{e_j},
    \end{align}
    where we have assumed without loss of generality that $d$ is even.
    We then define
    $ %
        \Lambda := \frac{1}{d}(\Id + X).
    $ %
    We define a sequence of Haar-random unitary operators $\{U_k\}_{k=1}^K$, for
    $K$ to be chosen later.
    Finally, we define
    $ %
        \sigma_{k} := U_k \Lambda U^{\dagger}
    $%
    and
    $ %
        \Packing(\gamma) := \{\sigma_k\}_{k=1}^K.
    $%
\end{definition}

We will call the problem of $K$-ary hypothesis testing, where the hypotheses
are each of the states in the packing, the $\Packing(\gamma)$-hypothesis testing problem.
In~\cite{lowe2025lowerboundslearningquantum}, this problem
is used in the development of lower bounds for
single-copy state tomography.  
They show that with probability at least
$1 - \exp(-\Omega(d^2))$, any given pair of distinct elements $\sigma_j, \sigma_k \in \Packing(\gamma)$ are $\gamma$-separated in trace distance.  Thus, choosing $K = \exp(\Theta(d^2))$,  $\Packing(\gamma)$ is a $\gamma$-packing in trace distance.

Furthermore, ~\cite{lowe2025lowerboundslearningquantum} shows that the sample complexity of distinguishing
among the states in $\Packing(\gamma)$ using single-copy measurements is
$\Omega(d^3/\gamma^2)$, which is then used to conclude their state tomography lower bound in the case of single-copy measurements.

\paragraph{Expressions for the H\"elstrom projectors induced by the packing} %
    In what follows, we will need explicit expressions for the H\"elstrom projectors
    induced by the packing.  Each $U_k$ can be expressed as 
    $ %
        \sum_{j=1}^d \ket{v_j^{(k)}}\bra{v_j^{(k)}}
    $ %
    for an orthonormal basis $\{ \ket{v_j^{(k)}} \}_{j=1}^d$.
    Then we have, by simple algebra,
    \begin{align}
        P_+(\rho, \sigma_k)
        &= \sum_{j=d/2+1}^d  \ket{v_j^{(k)}}\bra{v_j^{(k)}}  \\
        P_-(\rho, \sigma_k)
        &= \sum_{j=1}^{d/2} \ket{v_j^{(k)}}\bra{v_j^{(k)}}.
        \label{expr:helstrom-projector-formulas}
    \end{align}

\paragraph{Auxiliary lemmas}

\begin{lemma}[Excess risk controls diamond distance to $\HelstM(\rho,\sigma)$]
    \label{lemma:diamond-distance-risk-bound}
    Consider the hypothesis class $\Hyp$ consisting of all $2$-outcome POVMs on
    a fixed Hilbert space $\Hilb$.  We have the following upper bound on the
    diamond distance between a hypothesis $h$ and $\HelstM(\rho, \sigma)$, the minimum-risk
    hypothesis:
    $ %
        R(h) - R(\HelstM(\rho, \sigma))
        \geq \frac{\alpha}{4}\| h - \HelstM(\rho, \sigma)\|_\diamond^2,
    $ %
    where we define $\alpha$ to be the smallest singular value of $\rho - \sigma$
    (hence, $\alpha = \gamma/d$).
\end{lemma}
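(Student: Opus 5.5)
Throughout I take equal label priors $p_0 = p_1 = 1/2$, which I believe is the setting of this construction; for general priors the same argument goes through with $\alpha$ replaced by the smallest singular value of $p_1\sigma - p_0\rho$. The idea is to make the excess risk linear in the POVM and then convert it into an operator-norm bound. Write a two-outcome hypothesis as $h = (h_0, \Id - h_0)$ with $0 \le h_0 \le \Id$. By the Born rule,
\begin{align}
R(h) = p_0 \Tr(\rho(\Id - h_0)) + p_1 \Tr(\sigma h_0) = \tfrac{1}{2} + \tfrac{1}{2}\Tr(A h_0),
\end{align}
where $A := \sigma - \rho$ (up to the sign convention of Definition~\ref{def:helstrom-measurement}). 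The minimizer sets $h_0^\star = P$, the projector onto the negative eigenspace of $A$; this is the H\"elstrom measurement. Since $\rho - \sigma$ is invertible with all $|\lambda_j| \ge \alpha$, $P$ is well defined. Set $D := h_0 - P$.

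\textbf{Step 1 (linear lower bound).} In the eigenbasis $\{\ket{v_j}\}$ of $A$, the excess risk is $\frac12\sum_j \lambda_j \bra{v_j}D\ket{v_j}$. Because $0 \le h_0 \le \Id$, we have $\bra{v_j}D\ket{v_j} \le 0$ on the range of $P$ (where $\lambda_j<0$) and $\ge 0$ on the range of $P^\perp$. Hence every term is nonnegative, and
\begin{align}
R(h) - R(\HelstM(\rho,\sigma)) \ge \tfrac{\alpha}{2}\, t, \qquad t := \Tr(B) + \Tr(E),
\end{align}
with $B := P(\Id - h_0)P \ge 0$ and $E := P^\perp h_0 P^\perp \ge 0$.

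\textbf{Step 2 (diamond distance versus $t$).} For two-outcome POVMs, the expression recalled for $\|\cdot\|_\diamond$ reduces to $\max_\kappa |\Tr(\kappa D)| = \|D\|_{op}$. In the block decomposition $P \oplus P^\perp$,
\begin{align}
D = \begin{pmatrix} -B & C \\ C^\dagger & E \end{pmatrix}, \qquad C := P h_0 P^\perp.
\end{align}
Positivity of $h_0 = \begin{pmatrix} \Id - B & C \\ C^\dagger & E\end{pmatrix}$ gives $C^\dagger C \le \|\Id - B\|\,E \le E$ (Schur complement/Cauchy--Schwarz for PSD blocks). Therefore $\|C\|^2 \le \|E\| \le t$. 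This yields
\begin{align}
\|D\|_{op} \le \max(\|B\|,\|E\|) + \|C\| \le t + \sqrt{t}.
\end{align}
If $t\le 1$, this gives $\|D\|^2 \le 4t$. If $t>1$, we use the trivial bound $\|D\|\le 1$. In both cases $\|D\|^2 \le 4t$, and combining with Step 1 gives excess risk $\ge \frac{\alpha}{8}\|D\|^2$.

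\textbf{Main obstacle: the constant.} The form of the bound follows from Steps 1 and 2; the difficulty is matching the constant $1/4$ in the statement. I expect to close the gap in one of two ways. The first is to use the sharper two-by-two analysis: for the maximizing unit vector $\psi = a\,u + b\,w$ with $u \in \mathrm{ran}\,P$ and $w \in \mathrm{ran}\,P^\perp$, bound $|\bra{\psi}D\ket{\psi}| \le |a|^2\|B\| + |b|^2\|E\| + 2|ab|\sqrt{\|B\|\,\|E\|}$-type terms, using positivity of both $h_0$ and $\Id - h_0$ to control $C$. The second is to adopt the paper's diamond-distance normalization, which may include a factor of $2$ relative to total variation distance. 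Either way, the result for this construction only needs the statement up to absolute constants, so the remark on the range $\epsilon < c/d$ is unaffected.
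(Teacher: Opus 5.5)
Your Step~1 is the same as the paper's first step. Your $t=\Tr(P(\Id-h_0))+\Tr(P^\perp h_0)$ is exactly the paper's $L=\Tr(H_0(\Id-M_0))+\Tr(H_1M_0)$, and your handling of the factor $\frac12$ from equal priors is correct.

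The gap is in Step~2. Your block estimate is valid but lossy. The bound $\|D\|_{op}\le t+\sqrt{t}$ only gives $\|D\|_{op}^2\le 4t$, and hence excess risk $\ge\frac{\alpha}{8}\|D\|_{op}^2$. The paper recalls the diamond distance as maximal total variation, which you correctly reduce to $\|D\|_{op}$. Under that normalization your bound is $\frac{\alpha}{8}\|h-\HelstM(\rho,\sigma)\|_\diamond^2$, a factor $2$ short of the claim. Neither of your remedies closes this:
\begin{itemize}
\item The first remedy (a sharper two-by-two analysis) is not carried out.
\item The second remedy points the wrong way. You have already used the paper's normalization. With a $2\times$ total-variation normalization your bound would drop to $\frac{\alpha}{32}$, and the stated inequality would actually be false. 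In the paper's construction every eigenvalue of $A$ is $\pm\alpha$. Take $h_0=P+\ket{w}\bra{w}$ with $w\in\mathrm{ran}\,P^\perp$ an eigenvector. The excess risk is $\frac{\alpha}{2}$, while the right-hand side would be $\frac{\alpha}{4}\cdot 2^2=\alpha$.
\end{itemize}
``Up to constants'' suffices for Theorem~\ref{thm:separable-learning-rules-suboptimal}, but it does not prove the lemma as stated.

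The missing idea is to compare $t$ with the Frobenius norm, rather than bounding the blocks of $D$ separately and using the triangle inequality. Since $P^2=P$ and $0\le h_0\le\Id$,
\[
t=\Tr(P)+\Tr(h_0)-2\Tr(Ph_0)=\|h_0-P\|_F^2+\Tr\bigl(h_0(\Id-h_0)\bigr)\ge\|D\|_F^2\ge\|D\|_{op}^2 .
\]
In your block notation this reads $\Tr B^2+\Tr E^2+2\Tr C^\dagger C=t-\Tr(h_0(\Id-h_0))$.

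This is four times better than your $\|D\|_{op}^2\le 4t$. Combined with Step~1 it gives excess risk $\ge\frac{\alpha}{2}\|h-\HelstM(\rho,\sigma)\|_\diamond^2$, which implies the stated $\frac{\alpha}{4}$. The example above shows that $\frac{\alpha}{2}$ is tight.

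This is precisely the paper's argument. It shows $L-\|H_0-M_0\|_F^2=\Tr(M_0(\Id-M_0))\ge 0$ and then passes from the Frobenius norm to the operator norm. The paper's own write-up is loose with constants:
\begin{itemize}
\item its first display drops the $\frac12$;
\item its last step writes $\|H_0-M_0\|_{op}^2=4\|h-\HelstM(\rho,\sigma)\|_\diamond^2$, whereas under the recalled normalization the two quantities are equal.
\end{itemize}
So your bookkeeping combined with the Frobenius step is the consistent version.
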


The next lemma allows us to conclude that a good estimate $\hat{\sigma}$ of $\sigma$ in trace distance induces a low-risk hypothesis $h_*(\rho, \hat{\sigma})$.  This lemma
is useful in translating a tomography sample complexity upper bound to a sample 
complexity upper bound for learning.  
\begin{lemma}[Trace distance controls excess risk]
    \label{lemma:trace-distance-risk-bound}
    Let $\hat{\sigma} \in \Density(\Hilb)$.
    We have
    $%
        R(\HelstM(\rho, \hat{\sigma})) - R(\HelstM(\rho, \sigma))
        \leq \frac{1}{2} \|\sigma - \hat{\sigma}\|_1.
    $%
\end{lemma}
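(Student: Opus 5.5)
We need to prove Lemma (trace distance controls excess risk): $R(\HelstM(\rho,\hat\sigma)) - R(\HelstM(\rho,\sigma)) \le \frac12\|\sigma-\hat\sigma\|_1$. The risk of a two-outcome POVM $(\Pi_0,\Pi_1)$ is $R(\Pi) = p_0\Tr(\rho\Pi_1) + (1-p_0)\Tr(\sigma\Pi_0)$. The Hélstrom construction here uses the unweighted difference $\rho-\sigma$, which corresponds to the equal-prior case $p_0 = 1/2$ used throughout this section. The constant $\tfrac12$ also suggests equal priors, so I will prove the lemma for $p_0=1/2$ and remark on weights.

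The plan is to express the risk of a Hélstrom-type measurement as a linear functional of its projector and then compare the two projectors.

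\textbf{Step 1: linear expression for the risk.} Write $\Pi_1 = \Id - \Pi_0$. Then
\[ R(\Pi) = \tfrac12\bigl(1 - \Tr((\rho-\sigma)\Pi_1)\bigr) \]
for $p_0 = 1/2$. The labeling of outcomes should be fixed so that the optimal measurement maximizes $\Tr((\rho-\sigma)\Pi_1)$; this is a matter of the paper's outcome convention. Consequently, for any projector $P$ on the relevant eigenspace,
\[ R(\HelstM(\rho,\hat\sigma)) - R(\HelstM(\rho,\sigma)) = \tfrac12\bigl(\Tr((\rho-\sigma)P) - \Tr((\rho-\sigma)\hat P)\bigr), \]
where $P$ and $\hat P$ are the positive-part projectors of $\rho-\sigma$ and $\rho-\hat\sigma$.

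\textbf{Step 2: add and subtract $\rho-\hat\sigma$.} Write
\[ \Tr((\rho-\sigma)P) - \Tr((\rho-\sigma)\hat P) = \bigl[\Tr((\rho-\hat\sigma)P) - \Tr((\rho-\hat\sigma)\hat P)\bigr] + \Tr\bigl((\hat\sigma-\sigma)(P-\hat P)\bigr). \]
The bracketed term is $\le 0$, because $\hat P$ maximizes $\Tr((\rho-\hat\sigma)Q)$ over all $0\le Q\le\Id$ (the variational characterization of the positive part). The second term is at most $\|\hat\sigma-\sigma\|_1\cdot\|P-\hat P\|_\infty \le \|\sigma-\hat\sigma\|_1$ by Hölder's inequality, since $P-\hat P$ is a difference of projectors with operator norm at most $1$. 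Combined with the prefactor $\tfrac12$, this gives the claimed bound.

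\textbf{Main obstacle.} The delicate point is bookkeeping rather than analysis: matching the outcome-labeling convention of the Hélstrom measurement in the paper, and handling the prior $p_0$. For general $p_0$, the same argument applied to the weighted operator $p_0\rho-(1-p_0)\sigma$ gives a bound of $(1-p_0)\|\sigma-\hat\sigma\|_1 \le \|\sigma-\hat\sigma\|_1$. The stated $\tfrac12$ therefore presumes equal priors, and I would state this assumption explicitly.

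The rest is routine; zero-eigenvalue ambiguity does not matter, since those directions contribute zero to the traces.
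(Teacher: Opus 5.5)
Your argument is correct and is essentially the paper's proof: write $2R$ as $1$ minus a linear functional of the Helstr\"om projector, use optimality of $\HelstM(\rho,\hat\sigma)$ for the pair $(\rho,\hat\sigma)$ to discard the $(\rho-\hat\sigma)$ term, and bound the leftover $\Tr\bigl((\hat\sigma-\sigma)(P-\hat P)\bigr)$ by $\|\sigma-\hat\sigma\|_1$. The paper does this last step by splitting into two maxima over $0\le Q\le\Id$, each equal to $\tfrac12\|\sigma-\hat\sigma\|_1$, which is equivalent to your H\"older bound. Your explicit equal-prior assumption and your caution about outcome labels are well placed, since the paper's appendix silently takes $p_0=1/2$ and treats $h_{*,0}$ as the positive-part projector, which is the opposite of the $(P_-,P_+)$ labeling in its definition of $\HelstM$.
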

Proofs of these lemmas are deferred to the appendix.

\subsection{Sample complexity upper bound via state tomography}

\begin{proposition}[Sample complexity upper bound for $\Hyp$]
    \label{prop:sample-complexity-upper-bound}
    For constant $\delta > 0$, the optimal sample complexity of learning $\Hyp$, over all learning rules, is $O(\frac{d^2}{\epsilon^2})$.
\end{proposition}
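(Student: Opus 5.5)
The plan is to build an explicit (entangled) learning rule from optimal full-state tomography and convert its trace-distance guarantee into an excess-risk guarantee via Lemma~\ref{lemma:trace-distance-risk-bound}. In the scenario at hand, $\rho = \frac{1}{d}\Id$ is known to the learner (it is fixed by the construction), so only $\sigma$ has to be estimated. Since $\Hyp$ is all two-outcome POVMs, $\HelstM(\rho,\hat{\sigma}) \in \Hyp$ for every estimate $\hat\sigma$, so outputting it is legitimate.

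The learner proceeds as follows.
\begin{enumerate}
\item It uses the class labels, which it observes classically, to identify the subset of training registers with $Y_j = 1$. Conditioned on the labels, these registers are in the state $\sigma^{\otimes m_1}$, where $m_1 \sim \mathrm{Bin}(m, 1-p_0)$.
\item On $\sigma^{\otimes m_1}$ it runs the sample-optimal (entangled) state tomography algorithm of~\cite{SampleOptimalStateTomography,ODonnell2015EfficientQT}. That algorithm outputs $\hat\sigma$ with $\|\hat\sigma - \sigma\|_1 \le 2\epsilon$ with probability at least $1-\delta/2$, provided $m_1 = O(d^2/\epsilon^2)$ for constant $\delta$; more precisely $O(d^2\log(1/\delta)/\epsilon^2)$ with boosting, or $O(d^2/\epsilon^2)$ directly for constant failure probability.
\item It outputs $h = \HelstM(\rho,\hat\sigma)$.
\end{enumerate}

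By Lemma~\ref{lemma:trace-distance-risk-bound}, $R(h) - R(\HelstM(\rho,\sigma)) \le \tfrac12\|\sigma-\hat\sigma\|_1 \le \epsilon$. Since $\HelstM(\rho,\sigma)$ is the minimum-risk hypothesis in $\Hyp$, the excess risk over $\inf_{h\in\Hyp}R(h)$ is at most $\epsilon$.

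It remains to ensure that $m_1$ is large enough. The prior $p_0$ is a fixed constant in the constructed scenario (take, e.g., $p_0 = 1/2$, or any $p_0$ bounded away from $1$). A Chernoff bound then gives $m_1 \ge (1-p_0)m/2$ with probability at least $1-\delta/2$ once $m = \Omega(\log(1/\delta))$. Hence $m = O(d^2/\epsilon^2)$ total samples suffice, and a union bound over the two failure events gives overall success probability at least $1-\delta$. The proposition only asks for an upper bound on the optimal sample complexity, and this rule witnesses it.

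The main obstacle is bookkeeping rather than substance: pinning down the precise tomography guarantee in trace distance (as opposed to infidelity) with the $d^2/\epsilon^2$ scaling. The rank-$d$ result of O'Donnell--Wright / Haah et al.\ gives $O(d^2/\epsilon^2)$ copies for trace-distance error $\epsilon$ at constant confidence, and that is the version I would cite. Two further points need care. First, I must justify that $\rho$ is known to the learner; if it were not, the learner would estimate $\rho$ from the label-$0$ registers in the same way, and I would need a two-sided version of Lemma~\ref{lemma:trace-distance-risk-bound} bounding the excess risk by $\tfrac12(\|\rho-\hat\rho\|_1 + \|\sigma-\hat\sigma\|_1)$ up to prior weights, which follows by the same argument. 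Second, I must note that the tomography measurement acts only on registers selected using classical label information, which is allowed in the learning game. Finally, plugging in $\epsilon < c/d$ yields the $O(d^4)$ bound used in Theorem~\ref{thm:separable-learning-rules-suboptimal}.
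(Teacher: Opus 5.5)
Your proposal is correct and is essentially the paper's proof: run an optimal (entangled) tomography protocol to obtain $\hat\sigma$ close to $\sigma$ in trace distance, output $\HelstM(\rho,\hat\sigma)$, and convert the trace-distance error into excess risk via Lemma~\ref{lemma:trace-distance-risk-bound}. Your label-splitting and Chernoff bookkeeping only make explicit what the paper leaves implicit; one small caution is to fix $p_0 = 1/2$ rather than allowing ``any $p_0$ bounded away from $1$'', because Lemma~\ref{lemma:trace-distance-risk-bound} and the optimality of the unweighted $\HelstM(\rho,\sigma)$ both assume equal priors (otherwise you would need the prior-weighted Helstr\"om measurement).
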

\begin{proof}
    We analyze the sample complexity of the following learning rule:
    \begin{enumerate}
        \item
            Using an optimal state tomography protocol, produce an estimate
            $\hat{\sigma}$ of $\sigma$ satisfying
            $ %
                \|\sigma - \hat{\sigma} \|_1 \leq \epsilon.
            $ %
        \item
            Output the H\"elstrom measurement $\HelstM(\rho, \hat{\sigma})$.
    \end{enumerate}

    By Lemma~\ref{lemma:trace-distance-risk-bound}, this learning rule outputs a 
    hypothesis with excess risk at most $\epsilon/2$ with high probability.  The sample 
    complexity depends on that of state tomography in trace distance, which is $O(\frac{d^2}{\epsilon^2})$~\cite{lowe2025lowerboundslearningquantum},
    which completes the proof.
\end{proof}

\subsection{Lower bound for single-copy learning rules}
\begin{proposition}[Single-copy sample complexity lower bound for $\Hyp$]
    \label{prop:single-copy-lower-bound}
    The sample complexity of $(\epsilon, \delta=1-\Omega(1))$-PAC learning $\Hyp$ via any single-copy learning rule with $\epsilon = \frac{\gamma}{cd}$ for large enough constant $c > 0$ is
    $\Omega(d^5)$.
\end{proposition}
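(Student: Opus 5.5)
Our plan is to reduce the $\Packing(\gamma)$-hypothesis testing problem to single-copy PAC learning, and then invoke the single-copy testing lower bound of~\cite{lowe2025lowerboundslearningquantum}, which says that distinguishing among the states of $\Packing(\gamma)$ with single-copy measurements requires $\Omega(d^3/\gamma^2)$ copies. We take $\gamma$ to be a small constant. Then $\epsilon = \gamma/(cd) = \Theta(1/d)$, and the target bound $\Omega(d^5)$ should come from combining $d^3/\gamma^2$ with the loss of precision in converting excess risk into identification. Concretely, we expect the effective separation relevant to learning to scale like $\gamma/d$ in the appropriate sense, giving $d^3 \cdot d^2 = d^5$. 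We therefore instantiate the packing with the Helstrom-projector structure of~\eqref{expr:helstrom-projector-formulas}, so that each $\sigma_k$ is determined by the projector $P_-(\rho,\sigma_k)$. We also note that a labeled training set is, up to classical labels, a collection of copies of $\rho = \Id/d$ (which carry no information about $k$) and roughly $(1-p_0)m$ copies of $\sigma_k$.

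The key steps, in order, are as follows.

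\textbf{Step 1.} Suppose a single-copy learner $\Learner$ outputs $h$ with $R(h) - R(\HelstM(\rho,\sigma_k)) \le \epsilon$ with probability $\Omega(1)$. Lemma~\ref{lemma:diamond-distance-risk-bound} with $\alpha = \gamma/d$ then gives
\[
  \|h - \HelstM(\rho,\sigma_k)\|_\diamond \le 2\sqrt{\epsilon d/\gamma} = 2/\sqrt{c}.
\]
This is a small constant once $c$ is large.

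\textbf{Step 2.} We show that the Helstrom measurements of distinct packing elements are far apart in diamond distance, at least a constant. Since $P_\pm(\rho,\sigma_k)$ are the rank-$d/2$ projectors $U_k P U_k^\dagger$ with Haar $U_k$, two independent Haar conjugates are typically at constant operator-norm distance. By concentration and a union bound over $K = \exp(\Theta(d^2))$ pairs, this holds simultaneously for all pairs with high probability. Thus the nearest-neighbor map $h \mapsto \arg\min_k \|h - \HelstM(\rho,\sigma_k)\|_\diamond$ recovers $k$ whenever $2/\sqrt{c}$ is below half this separation.

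\textbf{Step 3.} We compose $\Learner$ with this classical post-processing. This yields a single-copy protocol for $\Packing(\gamma)$-testing with constant success probability, using the samples (we simulate the labels and the $\rho$-copies ourselves, since $\rho$ is known). Applying the $\Omega(d^3/\gamma^2)$ bound, with the dependence on $\epsilon$ tracked, then gives the claimed lower bound.

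The main obstacle is the quantitative mismatch in Step 3. Run naively, the argument yields only $\Omega(d^3/\gamma^2) = \Omega(d^3)$ for constant $\gamma$, not $\Omega(d^5)$. To reach $d^5$, we would rescale the construction: choose the packing with trace separation $\gamma' \asymp \sqrt{\epsilon\,\cdot}$-type scaling. Alternatively, we would run the argument with packing parameter $\gamma' = \Theta(1/d)$ while keeping $\epsilon = \Theta(\gamma'/d)\cdot d$, so that Lemma~\ref{lemma:diamond-distance-risk-bound} still forces constant diamond proximity. Then $d^3/\gamma'^2 = d^5$. The first thing we would try is to redo Step 1 with $\alpha = \gamma'/d$ and $\epsilon = \gamma'/(cd)$ for $\gamma' \asymp 1/d$, and to check that the Haar separation of Helstrom projectors in Step 2 is independent of $\gamma'$. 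That independence holds, because $P_\pm$ do not depend on $\gamma'$. This makes the reduction go through at the finer scale.
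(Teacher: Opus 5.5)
Your Steps 1--3 are the paper's reduction. Lemma~\ref{lemma:diamond-distance-risk-bound} puts the learner's output within diamond distance $2/\sqrt{c}$ of $\HelstM(\rho,\sigma)$. Lemma~\ref{lemma:state-packing-diamond-distance} makes nearest-neighbor decoding over $\HelstPacking(\gamma)$ return $\sigma$. The single-copy $\Packing(\gamma)$-testing bound then finishes the argument. Your remark that the tester simulates the labels and the $\Id/d$ copies itself is a detail the paper leaves implicit.

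You part ways with the paper at the last line, and there your version is the one whose arithmetic holds. The paper concludes $m=\Omega(d^3/\epsilon^2)=\Omega(d^5)$, which puts the learning accuracy $\epsilon\asymp 1/d$ where the testing bound has $\gamma$. With $\epsilon=\gamma/(cd)$, the reduction only yields $\Omega(d^3/\gamma^2)=\Omega(d/(c^2\epsilon^2))$. That is $\Omega(d^3)$ at constant $\gamma$, which is exactly the mismatch you flag. At constant $\gamma$ the $\Omega(d^5)$ claim is in fact false for this scenario: a single-copy learner that knows the packing can do single-copy tomography of $\sigma$ to accuracy $\gamma/3$ with $\tilde O(d^3/\gamma^2)$ samples, snap to the nearest $\sigma_k$, and output the exact H\"elstrom measurement. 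Your choice $\gamma\asymp 1/d$, $\epsilon=\gamma/(cd)\asymp 1/d^2$ is legitimate, because neither the radius $2/\sqrt{c}$ nor the separation of the H\"elstrom projectors depends on $\gamma$. It proves the proposition as literally stated.

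The price is the regime. At $\epsilon\asymp 1/d^2$, Proposition~\ref{prop:sample-complexity-upper-bound} gives only $O(d^6)$, so your version no longer combines with it to yield Theorem~\ref{thm:separable-learning-rules-suboptimal}. You would instead need, for example, an entangled learner that does tomography to accuracy $\gamma/3$ with $O(d^2/\gamma^2)=O(d^4)$ samples and then snaps to the packing.

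Discard your middle alternative $\epsilon=\Theta(\gamma')$. With Lemma~\ref{lemma:diamond-distance-risk-bound} as stated it only gives $\|h-\HelstM(\rho,\sigma)\|_\diamond^2=O(d)$, which is vacuous. That alternative does work if you keep the stronger inequality proved along the way in the appendix: $R(h)-R(\HelstM(\rho,\sigma))\gtrsim(\gamma/d)\|H_0-M_0\|_F^2$, where $H_0,M_0$ are the outcome-$0$ elements.
\begin{itemize}
\item The outcome-$0$ projectors $H_0^{(j)}$ of distinct packing elements are $\Omega(\sqrt d)$ apart in Frobenius norm, with failure probability $\exp(-\Omega(d^2))$ per pair, by concentration of $\Tr(H_0^{(j)}H_0^{(k)})$ around $d/4$.
\item Frobenius nearest-neighbor decoding then works for $\epsilon=\gamma/c$.
\item Taking $\gamma\asymp 1/d$ gives $\epsilon\asymp 1/d$ together with $\Omega(d^3/\gamma^2)=\Omega(d^3/\epsilon^2)=\Omega(d^5)$, which is the regime the theorem actually needs.
\end{itemize}
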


To prove Proposition~\ref{prop:single-copy-lower-bound}, we need the following key lemma.
It tells us that the state packing described in Definition~\ref{def:packing-definition} induces a collection of hypotheses
that are nearly maximally separated in diamond distance.
To state it, we define the set $\HelstPacking(\gamma)$ of hypotheses
induced by states in $\Packing(\gamma)$:
$ %
    \HelstPacking(\gamma)
    := \{ \HelstM(\rho, \sigma_j) \}_{j=1}^K.
$ %

\begin{lemma}[Random state packing induces large diamond distance hypothesis packing]
    \label{lemma:state-packing-diamond-distance}
    There exists a constant $const > 0$ such that with probability $1-o(1)$ as $d\to\infty$, for every pair
    of hypotheses $h_j \neq h_k$ in $\HelstPacking(\gamma)$, we have
    $ %
        \|h_j - h_k\|_{\diamond} \geq const.
    $ %
\end{lemma}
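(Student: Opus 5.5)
We reduce the diamond distance between two Helstrom measurements to an operator-norm distance between their projectors. Each $h_k = \HelstM(\rho,\sigma_k)$ is a two-outcome projective measurement $(P_-^{(k)}, P_+^{(k)})$, where $P_+^{(k)} = U_k Q U_k^\dagger$ and $Q := \sum_{j=d/2+1}^d \ket{e_j}\bra{e_j}$ by~\eqref{expr:helstrom-projector-formulas}. (We read $U_k$ as a Haar-random unitary whose columns form the basis $\{\ket{v_j^{(k)}}\}$.)

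For two-outcome POVMs, the diamond-distance formula recalled in the preliminaries gives
\[
\|h_j - h_k\|_\diamond = \max_{\kappa} \bigl|\Tr\bigl(\kappa (P_+^{(j)} - P_+^{(k)})\bigr)\bigr| = \|P_+^{(j)} - P_+^{(k)}\|_{\mathrm{op}}.
\]
The maximum is attained at a pure state $\kappa = \ket{\psi}\bra{\psi}$ with $\ket{\psi}$ an extreme eigenvector of the difference. So it suffices to show that, with probability $1-o(1)$, every pair of distinct random $d/2$-dimensional subspaces $V_j = \mathrm{range}(P_+^{(j)})$ and $V_k$ has
\[
\|P_{V_j} - P_{V_k}\|_{\mathrm{op}} \geq const.
\]

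The key geometric fact is the following. If $\dim V_j = \dim V_k = d/2$, then $\|P_{V_j} - P_{V_k}\|_{\mathrm{op}} = \sin\theta_{\max}$, the sine of the largest principal angle between the two subspaces. Equivalently, $\|P_{V_j}-P_{V_k}\|_{\mathrm{op}} \ge c$ unless $V_k$ lies within an operator-norm $c$-neighbourhood of $V_j$ in the Grassmannian $G(d,d/2)$.

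We will prove something stronger than needed: for two independent Haar subspaces, the largest principal angle is actually close to $\pi/2$ with overwhelming probability. The plan has three steps.

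First, take any fixed unit vector $\ket{\psi} \in V_j$ (fixed after conditioning on $U_j$). Then
\[
\bra{\psi}P_{V_k}\ket{\psi} = \|P_{V_k}\psi\|^2,
\]
which is distributed as a $\mathrm{Beta}(d/2, d/2)$-type variable, i.e.\ the squared norm of the projection of a uniform random unit vector onto a half-dimensional subspace. By Lévy concentration on the sphere, this quantity is $1/2 \pm t$ with probability at least $1-2e^{-c' d t^2}$. In particular it is at most $3/4$ with probability $1-e^{-\Omega(d)}$.

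Second, convert this into the desired norm bound. Since $\ket{\psi} \in V_j$, we have $\bra{\psi}P_{V_j}\ket{\psi} = 1$, and hence
\[
\bra{\psi}(P_{V_j}-P_{V_k})\ket{\psi} \geq 1/4,
\]
so $\|P_{V_j}-P_{V_k}\|_{\mathrm{op}} \ge 1/4$ for that pair.

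Third, take a union bound over all $\binom{K}{2}$ pairs.

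This union bound is the main obstacle. The paper chooses $K = \exp(\Theta(d^2))$, but the single-vector test above fails only with probability $e^{-\Omega(d)}$, which is far too weak. The fix is to use the full Grassmannian concentration rather than a single vector. Fix $V_j$ and consider the event $E = \{\|P_{V_j} - P_{V_k}\|_{\mathrm{op}} < 1/4\}$. On $E$, $V_k$ lies in an operator-norm ball of radius $1/4$ around $V_j$ in $G(d,d/2)$.

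To bound $\Pr(E)$, we test every vector of $V_j$ simultaneously. Choose an orthonormal basis $\psi_1,\dots,\psi_{d/2}$ of $V_j$. On $E$, every $\psi_i$ satisfies $\|P_{V_k}\psi_i\|^2 \ge 3/4$, so
\[
\Tr(P_{V_j}P_{V_k}) \geq \tfrac{3}{4}\cdot\tfrac{d}{2}.
\]
The function $U_k \mapsto \Tr(P_{V_j} U_k Q U_k^\dagger)$ has mean $d/4$ and is Lipschitz with constant $O(\sqrt{d})$ in the Hilbert–Schmidt metric on $U(d)$. Concentration of measure on $U(d)$, with log-Sobolev constant of order $1/d$, then gives a tail bound for a deviation of size $d/8$ of order
\[
\exp\bigl(-\Omega(d \cdot (d/8)^2 / d)\bigr) = \exp(-\Omega(d^2)).
\]
Thus $\Pr(E) \le e^{-c_1 d^2}$.

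Choosing $K = \exp(c_2 d^2)$ with $2c_2 < c_1$ (which is compatible with the paper's $K = \exp(\Theta(d^2))$), the union bound over $K^2$ pairs gives total failure probability $e^{-\Omega(d^2)} = o(1)$. This yields the lemma with $const = 1/4$. The delicate point to verify is the Lipschitz constant of $\Tr(P U Q U^\dagger)$ in $U$, namely at most $2\|P\|_2\|Q\|_{\mathrm{op}} \le 2\sqrt{d/2}$, together with matching the constants $c_1$ and $c_2$.
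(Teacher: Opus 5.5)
Your proposal is correct. Its outer skeleton matches the paper's: reduce the diamond distance to exhibiting a pure state in the range of one Helstr\"om projector whose overlap with the other projector is at most $3/4$, prove a pairwise failure probability $e^{-\Omega(d^2)}$, and union bound over the $K^2$ pairs with the exponent of $K$ chosen small enough.

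Where you differ is the key probabilistic step. The paper takes the basis vectors $\ket{v_j^{(2)}}$, $j\le d/2$, one at a time. Conditionally on the earlier ones, each is uniform on a sphere of dimension about $d-j$, so L\'evy concentration makes each bad event cost a factor $e^{-\Omega(d)}$. Multiplying the $d/2$ conditional probabilities then gives $e^{-\Omega(d^2)}$.

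You instead sum the overlaps over a basis of $V_j$. Failure then forces the single statistic $\Tr(P_{V_j}U_kQU_k^\dagger)$ to exceed its mean $d/4$ by $d/8$. One application of concentration on the unitary group gives $e^{-\Omega(d^2)}$ directly: the Lipschitz constant is $\sqrt{2d}$ in Hilbert--Schmidt norm, and since the statistic is invariant under $U_k\mapsto e^{i\theta}U_k$, the standard $SU(d)$ bound with exponent of order $d t^2/L^2$ suffices.

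Your route uses a heavier tool, but it sidesteps a dependence subtlety in the paper's chain. The paper asserts $\E[g(\ket{v_j^{(2)}}) \mid \{\ket{v_k^{(2)}}\}_{k<j}] \le 1/2$ unconditionally, which is false in general. It holds only on $B_1(t)\cap\cdots\cap B_{j-1}(t)$, where the earlier vectors have already absorbed more than their share of $\Tr P_-(\rho,\sigma_1) = d/2$. The paper's route, in exchange, needs only concentration on spheres.

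One minor point: you announce a ``stronger'' claim, that the largest principal angle is close to $\pi/2$, but never establish it. Your argument gives only $\|P_{V_j}-P_{V_k}\|_{\mathrm{op}} \ge 1/4$, which is all the lemma needs.
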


To prove Lemma~\ref{lemma:state-packing-diamond-distance}, we need the following
concentration result for Lipschitz functions on the unit sphere.
\begin{lemma}[Concentration for Lipschitz functions on the unit sphere~(\cite{vershynin2018high}, Ch. 5)]
    \label{lemma:lipschitz-sphere-concentration}
    Let $g:S^{n-1}\to\R$ have Euclidean Lipschitz constant $L$ and let $v \sim \Uniform(S^{n-1})$ (here, $S^{n-1}$ is the $n-1$-dimensional unit sphere).  Then for all $t > 0$,
    \begin{align}
        \Pr[ | g(v) - \E[g(v)] | \geq t ]
        \leq 2 \exp \left( \frac{-C n t^2}{L^2} \right).
    \end{align}
\end{lemma}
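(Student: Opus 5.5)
We prove this standard fact by the entropy (Herbst) method on the sphere, which yields concentration around the mean directly and so avoids converting from the median.

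\emph{Reductions.} First we reduce to smooth $g$. Any $L$-Lipschitz $g$ on $S^{n-1}$ is a uniform limit of smooth $L'$-Lipschitz functions with $L' \to L$; for instance, convolve with an approximate identity on the rotation group $SO(n)$, which preserves the Lipschitz constant because rotations are isometries. Both sides of the inequality pass to the limit, so we may take $g$ smooth with $|\nabla_{S} g| \leq L$ pointwise, where $\nabla_S$ is the spherical gradient. Replacing $g$ by $g/L$ lets us take $L=1$. Replacing $g$ by $-g$ shows it suffices to bound the upper tail $\Pr[g(v) - \E g(v) \geq t]$ and then double.

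\emph{Small dimensions.} Since $S^{n-1}$ has Euclidean diameter $2$, we always have $|g - \E g| \leq 2L$. Hence only $t \leq 2L$ matters. For $n \leq 2$ the claimed bound is at least $2e^{-8C} \geq 1$ once $C \leq (\ln 2)/8$, so it holds trivially. The same observation lets us absorb any discrepancy between $n$ and $n-2$ in the constant. So assume $n \geq 3$.

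\emph{Log-Sobolev inequality.} The key input is a log-Sobolev inequality for the uniform measure $\mu$ on $S^{n-1}$:
\[
\mathrm{Ent}_\mu(f^2) \leq \tfrac{2}{n-2}\,\E_\mu |\nabla_S f|^2 .
\]
This follows from the Bakry--\'Emery criterion, because $S^{n-1}$ is a Riemannian manifold with Ricci curvature $(n-2)$ times the metric. (The sharp constant $\tfrac{2}{n-1}$ is also known, but the cruder one suffices.) This step is the main obstacle if one wants to be self-contained: the curvature criterion, or a direct semigroup proof via the heat semigroup on the sphere, is the real content. I would simply invoke it.

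\emph{Herbst argument.} Apply the inequality with $f = e^{\lambda g/2}$ and set $H(\lambda) := \E e^{\lambda g}$. Since $|\nabla_S f|^2 = \tfrac{\lambda^2}{4}|\nabla_S g|^2 e^{\lambda g} \leq \tfrac{\lambda^2}{4} e^{\lambda g}$, we get
\[
\lambda H'(\lambda) - H(\lambda)\log H(\lambda) \leq \frac{\lambda^2}{2(n-2)}\,H(\lambda).
\]
Dividing by $\lambda^2 H$, this says $\tfrac{d}{d\lambda}\bigl(\lambda^{-1}\log H(\lambda)\bigr) \leq \tfrac{1}{2(n-2)}$. As $\lambda \to 0$, $\lambda^{-1}\log H(\lambda) \to \E g$. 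Integrating gives
\[
\E e^{\lambda(g - \E g)} \leq e^{\lambda^2/(2(n-2))}.
\]
Markov's inequality with $\lambda = (n-2)t$ then yields $\Pr[g - \E g \geq t] \leq e^{-(n-2)t^2/2}$. Using $n-2 \geq n/3$ for $n \geq 3$ and doubling for the two tails gives $2\exp(-C n t^2 / L^2)$ with, say, $C = 1/6$ (reduced to $\min\{1/6,(\ln 2)/8\}$ to cover $n \leq 2$).

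\emph{Alternative route.} If we prefer to avoid the log-Sobolev inequality, we can use Gaussian concentration instead. Write $v = G/\|G\|_2$ with $G \sim N(0,I_n)$, and consider $\tilde g(x) := \|x\|_2\, g(x/\|x\|_2)$, which is $O(L)$-Lipschitz on $\mathbb{R}^n$ after centering $g$ (or precomposing with a radial cutoff). Gaussian Lipschitz concentration applies to $\tilde g(G)$. Separately, $\|G\|_2$ concentrates at $\sqrt{n}$ within $O(1)$ with subgaussian tails. Combining the two gives the same bound up to constants, though the bookkeeping is messier; hence the log-Sobolev route is the primary plan.
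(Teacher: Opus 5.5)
Your argument is correct, but it takes a different route from the one the paper relies on. The paper gives no proof here and cites Vershynin, Ch.~5, where the result comes from L\'evy's isoperimetric inequality on the sphere. There, a set of measure at least $1/2$ has $t$-neighbourhood of measure at least $1-2e^{-cnt^2}$. Applying this to the sublevel set $\{g \leq M\}$ at a median $M$ gives concentration about the median. A centering argument for sub-Gaussian variables then moves the centre to $\E g(v)$, at the cost of unspecified constants.

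You instead use a functional inequality: the Bakry--\'Emery log-Sobolev inequality with constant $2/(n-2)$ (from $\mathrm{Ric}=(n-2)$ on $S^{n-1}$), followed by the Herbst argument. This gives the moment generating function bound $\E e^{\lambda(g-\E g)} \leq e^{\lambda^2/(2(n-2))}$ centred at the mean in one step. After your small-$n$ bookkeeping, the constant is explicit: $C=(\ln 2)/8$.

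Your reductions check out:
\begin{itemize}
\item averaging over $SO(n)$ preserves the Lipschitz constant exactly;
\item a smooth Euclidean $L$-Lipschitz function has spherical gradient of norm at most $L$ (compare arc and chord along a geodesic);
\item $|g-\E g| \leq 2L$ makes the cases $t>2L$ and $n\leq 2$ trivial.
\end{itemize}

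The isoperimetric route buys set-level (blow-up) concentration and median concentration, which are more geometric. Yours buys a sub-Gaussian moment generating function bound about the mean directly, with explicit constants. In both, the real content is a quoted black box: L\'evy's inequality in one case, the curvature criterion in the other.

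Your Gaussian fallback is also sound. Once $\E g(v)=0$ you have $|g| \leq 2L$, so $\tilde g$ is $4L$-Lipschitz. The independence of $\|G\|_2$ and $G/\|G\|_2$ then handles the radial factor.
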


\begin{proof}[Proof of Lemma~\ref{lemma:state-packing-diamond-distance}]
        Let $\sigma_1, \sigma_2 \in \Packing(\gamma)$.
        Since the $U_j$ are iid, %
        it is sufficient to prove that 
        \begin{align}
            \label{expr:pairwise-diamond-probability}
            \Pr[ \| \HelstM(\rho, \sigma_1) - \HelstM(\rho, \sigma_2)  \|_{\diamond}  < const ] \leq \exp(-\Omega(d^2)).
        \end{align}
        Then, by the choice of $K$ and the union bound over pairs of elements in $\HelstPacking(\gamma)$, we must have the claimed
        result.

       Toward (\ref{expr:pairwise-diamond-probability}), we invoke the definition of
       the diamond distance %
       and the definition of
       measurement by a POVM to conclude that to lower bound the diamond distance,
       it is sufficient to find a state $\kappa \in \Density(\Hilb)$ for which
       $%
           | \Tr(\kappa P_-(\rho, \sigma_1)) - \Tr(\kappa P_-(\rho, \sigma_2))|
           > const.
       $ %

        Recalling the H\"elstrom projector formulas (\ref{expr:helstrom-projector-formulas}), we claim that there exists an element 
        $\kappa \in \{\ket{v_j^{(2)}}\bra{v_j^{(2)}}\}_{j=1}^{d/2}$ for which
        \begin{align}
            | \Tr(\kappa P_-(\rho, \sigma_1)) - \Tr(\kappa P_-(\rho, \sigma_2))|
            \geq 1/4,
            \label{expr:target-trace-difference}
        \end{align}
        with probability $\geq 1 - 2\exp(-\Omega(d^2))$.

        To verify this, note first that for any choice of $j$, $\Tr( \ket{v_j^{(2)}}\bra{v_j^{(2)}} P_-(\rho, \sigma_2)) = 1$.

        It remains to show that with probability $\geq 1 - 2\exp(-\Omega(d^2))$,
        there is some $j$ for which 
        $\Tr( \ket{v_j^{(2)}}\bra{v_j^{(2)}} P_-(\rho, \sigma_1)) < 3/4$.

        To do this, we will use the result on the concentration of Lipschitz functions on the unit sphere.  We define the function
        $g(\ket{v}) := \Tr(\ket{v}\bra{v} P_-(\rho, \sigma_1))$.
        By direct calculation, $g(\ket{v})$ has a Lipschitz constant $O(1)$ as $d\to\infty$.

        Conditioned on $\{\ket{v_k^{(2)}}\}_{k=1}^{j-1}$, $\ket{v_j^{(2)}}$ is
        a uniformly random unit vector in the orthogonal complement of 
        $\{\ket{v_k^{(2)}}\}_{k=1}^{j-1}$.  This implies
        $ %
            \E[ g(\ket{v_j^{(2)}})  ~|~ \{\ket{v_k^{(2)}}\}_{k=1}^{j-1}]
            \leq 1/2.
        $ %
        This space has dimension $\leq d - j+1$.
        
        For any $t > 0$,
        define the bad event       
        $%
            B_j(t) := [g(\text{\small$\ket{v_j^{(2)}}$}) - 1/2 \geq t].
        $%
        Then we want to upper bound  
        $\Pr[ \intersect_{j=1}^{d/2} B_j(t) ] \leq 2\exp(-Cd^2t^2)$. 
        Toward this,
        \begin{align}
            \Pr[ \intersect_{j=1}^{d/2} B_j(t) ]
            = \prod_{j=1}^{d/2} \Pr[B_j(t) ~|~ B_1(t), ..., B_{j-1}(t)].
        \end{align}
        To upper bound each factor of the product, we further condition on 
        $\{\ket{v_k^{(2)}}\}_{k=1}^{j-1}$, and, using Lemma~\ref{lemma:lipschitz-sphere-concentration}, we have that
        \begin{align}
            \Pr[B_j(t) ~|~ B_1(t), ..., B_{j=1}(t)]
            \leq \exp(-ct^2(d-j)).
        \end{align}
        This implies that 
        $ %
            \Pr[ \intersect_{j=1}^{d/2} B_j(t) ]
            \leq \exp(-c \sum_{j=1}^{d/2} (d-j) t^2)
            = \exp(-t^2\Omega(d^2)).
        $ %
        This implies that with very high probability, there is some $j$ for which
        $g(\ket{v_j^{(2)}}) \leq 1/2$, which implies (\ref{expr:target-trace-difference}),
        which finally implies (\ref{expr:pairwise-diamond-probability}) and hence
        the claimed result.
\end{proof}

\begin{proof}[Proof of Proposition~\ref{prop:single-copy-lower-bound}]
    Suppose that there exists a $(\epsilon, \delta=\Theta(1))$-PAC learner for $\Hyp$.
    We will use it to construct a tester for the $\Packing(\gamma)$-hypothesis testing problem.  Suppose that $\sigma \in \Packing(\gamma)$ is the ground-truth
    hypothesis in the test.
    
    Given a dataset of size $m$, 
    we can use the learner to produce,
    with probability $\Omega(1)$, a measurement $h \in \Hyp$ satisfying
    $R(h) - R(\HelstM(\rho,\sigma)) \leq \epsilon$.  By
    Lemma~\ref{lemma:diamond-distance-risk-bound}, this implies
    $ %
        \epsilon 
        \geq \frac{\alpha}{4} \|h - \HelstM(\rho, \sigma)\|_{\diamond}^2
        = \frac{\gamma}{4d}\|h - \HelstM(\rho, \sigma)\|_{\diamond}^2,
    $ %
    which in turn implies that $\|h - \HelstM(\rho, \sigma)\|_{\diamond}^2$ can be
    made arbitrarily small by choosing $c$ sufficiently large in the
    stipulated upper bound on $\epsilon$. 

    Lemma~\ref{lemma:state-packing-diamond-distance} has two important implications:
    (i) the natural mapping from states $\sigma \in \Packing(\gamma)$ to measurements
    $\HelstM(\rho, \sigma) \in \HelstPacking(\gamma)$ is bijective;
    and 
    (ii) uniquely, $\HelstM(\rho, \sigma) = \argmin_{\hat{h} \in \HelstPacking(\gamma)} \|h - \hat{h}\|_{\diamond}$.
    To complete the test, we select $h_* := \argmin_{\hat{h} \in \HelstPacking(\gamma)} \|h - \hat{h}\|_{\diamond}$.  We then return the unique state in the state packing
    corresponding to $h_*$.  By (ii), this is guaranteed to be $\sigma$.
    Thus, an $(\epsilon, 1-\Omega(1))$-PAC learner with access to a dataset of size $m$
    can be used to solve the $\Packing(\gamma)$-hypothesis testing problem with probability $\Omega(1)$ for our choice of $\epsilon$.
    By the single-copy tomography lower bound, this implies that
    $m = \Omega(d^3/\epsilon^2) = \Omega(d^5)$, as desired.
\end{proof}

\subsection{Putting everything together}
\label{sec:separable-suboptimal-everything-together}
    To complete the proof of Theorem~\ref{thm:separable-learning-rules-suboptimal},
    we simply note that Propositions~\ref{prop:sample-complexity-upper-bound}
    and~\ref{prop:single-copy-lower-bound}, together with our choice of $\epsilon, \delta$ in Proposition~\ref{prop:single-copy-lower-bound}, imply that separable
    learning rules have strictly larger sample complexity than an optimal learning rule for the constructed learning scenario and ranges of parameters $\epsilon, \delta$.

\section{Proof of Theorem~\ref{thm:separable-non-separable-gap-bound}}
\label{proof:separable-non-separable-gap-bound}

Suppose that $\Hyp$ is $(\epsilon,\delta)$-PAC learnable
with optimal sample complexity
$m_0(\epsilon,\delta)$.
We first relate PAC learnability to the fat-shattering
dimension of $\Hyp$. By the classical lower bound for real-valued PAC learning
\cite[Thm.~19.5]{anthony1999neural},
there exists a universal constant $c>0$ such that
$%
    \fat_{c\epsilon}(\Hyp)
    \leq \mathrm{poly}
    \left(
        m_0(\epsilon,\delta),
        \frac{1}{\epsilon},
        \log\frac{1}{\delta}
    \right).
$ %
Here we have also used the fact that $m_0(\epsilon, \delta)$ is an upper bound for the classical sample complexity of learning the probability
function class induced by the hypothesis class.

Next, we invoke the sample complexity upper bound for denoised empirical risk minimization (DERM) established in Theorem~1 of \cite{Magner2024Fat}. In particular, equation~(9) therein states that DERM PAC learns $\Hyp$ using single-copy measurements with sample complexity upper bounded by 
$%
    m_{\mathrm{DERM}}(\epsilon,\delta) \le \mathrm{poly} \left( \fat_{c\epsilon}(\Hyp), N_{\mathrm{JM}}(\Hyp,\epsilon), \frac1\epsilon, \log\frac1\delta \right).
$ %
Substituting the fat-shattering upper bound into the \textsf{DERM} sample complexity expression yields
$%
    m_{\mathrm{DERM}}(\epsilon,\delta)
    \leq
    \mathrm{poly}\left(
        m_0(\epsilon,\delta),
        N_{\mathrm{JM}}(\Hyp,\epsilon),
        \frac{1}{\epsilon},
        \log\frac{1}{\delta}
    \right).
$ %

Therefore, the sample complexity of learning via
single-copy measurements is at most polynomially
worse in the listed parameters than the optimal unrestricted learning rule, which completes the proof.

\section*{Acknowledgments}
AI use disclosure: Google Gemini was used for validation of the proofs of auxiliary Lemmas~\ref{lemma:diamond-distance-risk-bound} and ~\ref{lemma:trace-distance-risk-bound} contained in the full paper.  ChatGPT was used for light feedback on grammar and spelling.  No AI was used to produce text in the paper.

\bibliographystyle{IEEEtran}
\bibliography{references}   %

\onecolumn
\section{Appendix}
\label{sec:appendix}

Here we give proofs of lemmas omitted in the main text.

\subsection{Proof of Lemma~\ref{lemma:diamond-distance-risk-bound}}

    For brevity, we denote by $\Delta R$ the excess risk to be lower bounded.  We also write $D := \HelstOp(\rho, \sigma)$.
    We can decompose $D$ into
    $D = D_+ - D_-$, the positive and negative parts of $D$.
    We denote by $H_0, H_1$ the projectors onto
    the eigenspaces of $D_+$ and $D_-$, respectively.
    We additionally denote the outcome-$0$ and outcome-$1$
    parts of the hypothesis $h$ by $M_0, M_1$, respectively.

    We can express $\Delta R$ as
    \begin{align}
        \Delta R 
        &= \Tr((D_+ - D_-) \cdot (H_0 - M_0)) \\
        &= \Tr(D_+ \cdot (\Id - M_0)) + \Tr(D_- \cdot M_0).
    \end{align}
    by simple algebra.

    Applying the lower bound on the singular values of $D$,
    we get
    \begin{align}
        \Delta R
        \geq \alpha \cdot [ \Tr( H_0 \cdot (I-M_0))
                        + \Tr(H_1 \cdot M_0)]
        = \alpha \cdot L,                
    \end{align}
    where we define
    \begin{align}
        L := \Tr(H_0) + \Tr(M_0) - 2 \Tr(H_0 M_0).
    \end{align}

    We will relate $L$ with the Frobenius norm
    $\|H_0 - M_0\|_{F}^2$.  We have
    \begin{align}
        \|H_0 - M_0\|_{F}^2
        &= \Tr((H_0 - M_0)^2) \\
        &= \Tr(H_0) - 2 \Tr(H_0M_0) + \Tr(M_0^2),
    \end{align}
    using cyclicity of the trace and the fact that $H_0$
    is a projection operator.

    Now
    \begin{align}
        L - \|H_0 - M_0\|_F^2
        = \Tr(M_0) - \Tr(M_0^2)
        = \Tr(M_0 \cdot (\Id - M_0)).
    \end{align}
    This can be seen to be non-negative because
    $0 \leq M_0 \leq \Id$, since $M_0$ is a POVM element.

    We have thus shown that $\Delta R \geq \alpha \cdot \|H_0 - M_0\|_F^2$.
    We can further lower bound the squared Frobenius norm by the squared spectral norm
    $\|H_0 - M_0\|_{op,L_2}^2$ (i.e., the largest eigenvalue of $H_0 - M_0$).  Finally, we have
    $\|H_0 - M_0\|_{op,L_2}^2 = 4 \|H - M\|_{\diamond}^2$,
    which completes the proof.

\subsection{Proof of Lemma~\ref{lemma:trace-distance-risk-bound}}

     For a POVM $h = (h_0, h_1)$, we can express $R(h)$
    as follows:
    \begin{align}
        &2\cdot R(h) = \Pr[ \Out(h, \rho) = 1] = \Pr[\Out(h, \sigma) = 0] \\
        &= 1 - \Tr((\rho - \sigma)\cdot h_0).
    \end{align}
    Let $\hat{h} := \HelstM(\rho, \hat{\sigma})$ and let
    $h_* := \HelstM(\rho, \sigma)$.
    Then we have
    \begin{align}
        2\cdot (R(\hat{h}) - R(h_*))
        = \Tr((\rho-\sigma) \cdot (h_{*,0} - \hat{h}_{0})).
    \end{align}
    Using optimality of $\hat{h}$ for distinguishing $\rho$ and
    $\hat{\sigma}$, we have
    \begin{align}
        \Tr(\rho \cdot (h_{*,0} - \hat{h}_0))
        \leq \Tr(\hat{\sigma} (h_{*,0} - \hat{h}_0)).
    \end{align}
    This implies 
    \begin{align}
        &\Tr((\rho-\sigma) \cdot (h_{*,0} - \hat{h}_{0})) \\
        &\leq \Tr( (\hat{\sigma} - \sigma)\cdot h_{*,0}) 
        - \Tr((\hat{\sigma} - \sigma) \cdot \hat{h}_0) \\
        & \max_{0 \leq P \leq \Id} \Tr((\hat{\sigma} - \sigma)P) + \max_{0 \leq P \leq \Id} \Tr((\sigma - \hat{\sigma})P)
        = \|\sigma - \hat{\sigma}\|_{1},
    \end{align}
    which completes the proof.

\end{document}